\newtheorem{theorem}{Theorem}
\begin{document}
\title{\LARGE \bf
Distributionally Robust Model Predictive Control with Mixture of Gaussian Processes*}
\author{Jingyi Wu and Chao Ning
\thanks{*This work was partially supported by the National Natural Science Foundation of China under Grants 62473256 and 62103264 (Corresponding author: Chao Ning). }
\thanks{J. Wu and C. Ning are with the Department of Automation, Shanghai Jiao Tong University.}
}

\maketitle

\begin{abstract}
Despite the success of Gaussian process based Model Predictive Control (MPC) in robotic control, its applicability scope is greatly hindered by multimodal disturbances that are prevalent in real-world settings. 
Here we propose a novel Mixture of Gaussian Processes based Distributionally Robust MPC (MoGP-DR-MPC) framework for linear time-invariant systems subject to potentially multimodal state-dependent disturbances. 
This framework utilizes MoGP to automatically determine the number of modes from disturbance data.
Using the mean and variance information provided by each mode-specific predictive distribution, it constructs a data-driven state-dependent ambiguity set, which allows for flexible and fine-grained disturbance modeling.
Based on this ambiguity set, we impose Distributionally Robust Conditional Value-at-Risk (DR-CVaR) constraints to effectively achieve distributional robustness against errors in the predictive distributions. To address the computational challenge posed by these constraints in the resulting MPC problem, we equivalently reformulate the DR-CVaR constraints into tractable second-order cone constraints.
Furthermore, we provide theoretical guarantees on the recursive feasibility and stability of the proposed framework.
The enhanced control performance of MoGP-DR-MPC is validated through both numerical experiments and simulations on a quadrotor system, demonstrating notable reductions in closed-loop cost by 17\% and 4\% respectively compared against Gaussian process based MPC.

\end{abstract}


\section{Introduction}
\label{sec:introduction}
Multimodal state-dependent disturbances are pervasive in real-world applications, posing significant challenges to control synthesis because of their inherent complexity and variability. 
Gaussian Process based Model Predictive Control (GP-MPC) offers an effective approach for controlling systems subject to state-dependent disturbances and its control performance largely depends on the accuracy of the GP model.
However, due to the stationary kernels employed in GP, the performance of GP-MPC deteriorates remarkably under multimodal disturbances. Such performance decline underscores the pressing research need for new MPC strategies to tackle the challenges introduced by the heterogeneous modalities. 

In the face of unknown disturbances, different strategies have been proposed for GP-MPC.
Among them, a robust strategy is to enforce constraint satisfaction under the worst-case scenario to ensure complete safety \cite{berkenkamp2015safe}. For instance, References \cite{wang2015robust} and \cite{rose2023learning} used confidence interval bounds and worst-case error bounds, respectively, to tighten constraints.
However, such robust methods tend to be overly conservative. An alternative is to integrate GP with stochastic MPC, which employs chance constraints to substantially mitigate conservatism \cite{bradford2020stochastic}.
To expedite the solution process, Reference \cite{c2} used probabilistic reachable sets to reformulate chance constraints, while \cite{capone2024online} derived a deterministic solution method through online GP-based binary regression.
When the underlying distribution is not perfectly known, stochastic GP-MPC methods fall short of guaranteeing the probabilistic constraint satisfaction. To address this issue, Distributionally Robust MPC (DR-MPC) has been proposed and gained tremendous popularity, ensuring robustness over a so-called ambiguity set\cite{li2023distributionally}\cite{kim2023distributional}. 
DR-MPC can also integrate data-driven techniques to extract moment information or approximate distributions, facilitating the construction of the ambiguity set
\cite{mark2022recursively}\cite{pan2023distributionally}\cite{micheli2022data}.
Notably, these DR-MPC approaches typically assume that disturbances are independent of system states, overlooking state-dependent scenarios.
Drawing upon the capability of DR-MPC to tackle distributional uncertainties, Reference \cite{hakobyan2020learning} introduced Distributionally Robust Conditional Value-at-Risk (DR-CVaR) constraints into GP-MPC, enhancing safety even when the true distribution deviates from the estimated one. 
However, according to the existing literature, a critical research gap remains: most existing GP-MPC research primarily focuses on unimodal disturbances, thus enormously limiting their applicability in real-world settings where multimodal state-dependent disturbances are prevalent.


To fill this challenging gap, we propose a Mixture of Gaussian Processes-based Distributionally Robust MPC (MoGP-DR-MPC) framework. 
To immunize against errors in the predictive distributions, an ambiguity set is constructed based on the structure of modalities and mode-specific predictive distributions.
Within this framework, DR-CVaR constraints are employed in the first predicted system state, while subsequent constraints are tightened via an invariant set \cite{c3}. 
We equivalently reformulate the DR-CVaR constraints as second-order cone constraints to facilitate computation for control actions\cite{c5}, and theoretically provide guarantees for the proposed framework. Its effectiveness and superiority are substantiated through two case studies. The major contributions are summarized as follows:
\begin{itemize}
    \item We develop a novel DR-MPC framework that delicately integrates MoGP with MPC to enhance control performance in the presence of state-dependent and potentially multimodal disturbances, while ensuring recursive feasibility and stability.
    \item We propose an innovative data-driven state-dependent ambiguity set, which closely respects the inherent multimodalities and efficaciously enables the incorporation of fined-grained moment information.
\end{itemize}

\section{PROBLEM FORMULATION}
We consider the following stochastic linear discrete time-invariant system. 
\begin{equation}
    x_{k+1}=Ax_k+Bu_k+w(x_k),\label{dynamics}
\end{equation}
where $x_k\in\mathbb{R}^n,u_k\in\mathbb{R}^m$ and $w(x_k)\in\mathbb{R}^n$ denote the system state, control input and disturbance at time $k$, respectively.
In particular, unknown disturbance $w(\cdot)$ is state-dependent, possibly multimodal and bounded by support $\mathbb{W}$, which is represented by a box constraint set $[\Tilde{w}_1,\Tilde{w}_2]\subseteq\mathbb{R}^n$. 
Assume system matrix $(A,B)$ is known and stabilizable.
The system is subject to the probabilistic state constraints and hard input constraints defined as follows.
\begin{subequations}
\setlength{\abovedisplayskip}{1pt} 
\setlength{\belowdisplayskip}{3pt}
	\begin{align}
  \Pr_{x_k\sim\mathbb{P}}
  (x_k\in&\mathbb{X})\geq 1-\varepsilon,\label{state_cons}\\
    u_k&\in\mathbb{U}, \label{input_cons}
	\end{align}
\end{subequations}
where $\mathbb{X}=[\Tilde{x}_1,\Tilde{x}_2]\subseteq \mathbb{R}^n$ and $\mathbb{U}=\{u\mid H_u u\leq h_u\}$. These sets are compact and contain the origin within their interiors. The notation
$\Pr$ represents the probability and $\varepsilon$ is the prescribed tolerance level that curbs the probability of constraint violation. 

Since the underlying distribution $\mathbb{P}$ is not perfectly known, we propose using a DR-CVaR version of Constraint (\ref{state_cons}), which can be defined as
\begin{equation}
\setlength{\abovedisplayskip}{2pt} 
\setlength{\belowdisplayskip}{2pt}
    \begin{gathered}
            \sup_{\mathbb{Q}\in Q}\mathbb{Q}\text{-CVaR}_{\varepsilon}\left\{(-1)^{i}\left(x_k^{(j)}-\Tilde{x}_i^{(j)}\right)\right\}\leq0 ,\\
    i\in\{1,2\},j\in\{1,\ldots,n\}        
    \label{DR-CVaR}
\end{gathered}
\end{equation}
where the superscript $(\cdot) ^{(j)}$ represents the $j$-th element of the corresponding vector throughout this paper, and $Q$ is an ambiguity set derived from historical data using learning techniques presented in the following section. Although Constraint (\ref{DR-CVaR}) may appear more complex, it admits an exact and convex reformulation that is more tractable in computation than the original chance constraint.

\section{INFINITE MIXTURE of GPs}

To accurately model the state-dependent, potentially multimodal disturbance $w(\cdot)$ and derive the corresponding ambiguity set, we adopt an MoGP approach inspired by the Mixture of experts (MoE) framework \cite{c4}. This method can naturally incorporate the multimodal structure into the model of $w(\cdot)$.

In MoGP, data points are assigned to local experts via a gating network, with each expert then using the assigned data to train its local GP model. Based on this strategy, the observation likelihood for the MoGP model is given by
$$
\Pr(\mathbf{y}|\mathbf{z},\theta)=\prod_i\sum_j \Pr(y_i|c_i=j,z_i,\theta_j)\Pr(c_i=j|z_i,\phi),
$$
where $\mathbf{z}$ represents the training inputs, $\mathbf{y}$ denotes the outputs, $\theta_j$ denotes the parameters of expert $j$, $\phi$ refers to the parameters of the gating network, and $c_i$ represents the indicator variables specifying the corresponding experts. The number of experts is automatically determined from data through maximizing the likelihood.

\subsection{Gating Network Based on Dirichlet Process} 
To accommodate the state-dependent property, we tailor the Dirichlet process to serve as an input-dependent gating network. In the standard Dirichlet process with the concentration parameter $\alpha$, the conditional probability of the indicator variable $c_i$ is given by
\begin{equation}
\setlength{\abovedisplayskip}{2pt} 
\setlength{\belowdisplayskip}{2pt}
    \begin{cases} p(c_i=j|\mathbf{c}_{-i},\alpha)&=\frac{n_{-i,j}}{n-1+\alpha},\;(n_{-i,j}>0)\\
    p(c_i\neq c_{i^{\prime}}\text{ for all }i^{\prime}\neq i|\mathbf{c}_{-i},\alpha)&=\frac{\alpha}{n-1+\alpha},\;(n_{-i,j}= 0)\end{cases} \label{DP}
\end{equation}
where $n_{-i,j}=\sum_{i'\neq i}\delta(c_{i'},j)$ represents the number of data points in expert $j$ excluding observation $i$, and $n$ is the total number of data points. 
To incorporate input dependence into the gating network, we redefine $n_{-i,j}$ as follows.
\begin{equation}
\setlength{\abovedisplayskip}{2pt} 
\setlength{\belowdisplayskip}{2pt}
    n_{-i,j}=(n-1)\frac{\sum_{i^{\prime}\neq i}K_\phi(z_i,z_{i^{\prime}})\delta(c_{i^{\prime}},j)}{\sum_{i^{\prime}\neq i}K_\phi(z_i,z_{i^{\prime}})},\label{nij}
\end{equation}
where $K_{\phi}$ is a kernel function parametrized by $\phi$ and $\delta$ is the Dirac delta function. Consequently, given a new data point, the conditional distribution of its indicator variable can be obtained by substituting (\ref{nij}) into (\ref{DP}).

\subsection{Local Gaussian Process}
For a local GP expert $j$, let $\mathbf{z}=\{z_1,\ldots z_N\}$ represent the training inputs and  $\mathbf{y}=\{y_1,\ldots,y_N\}$ denote the corresponding noisy outputs, where $y_i = w(z_i)+\epsilon_i$. The observation noise $\epsilon_i$ is assumed to be i.i.d, following a normal distribution with mean $0$ and variance $\Sigma_{on}=\text{diag}\{\sigma_{1}^2,\ldots,\sigma_n^2\}$. 
Each output dimension is modeled independently. For dimension $j$, we define a GP prior with kernel $k(\cdot,\cdot)$ and mean function $m(\cdot)$. 
Conditioning on the data $\mathbf{z}$ and $\mathbf{y}^{(j)}=\{y_1^{(j)},\ldots,y_N^{(j)}\}$, 
the predictive posterior distribution for the output at a new data point $z^*$ is Gaussian with the following mean and variance.
\begin{equation}
\setlength{\abovedisplayskip}{2.5pt} 
\setlength{\belowdisplayskip}{2.5pt}
	\begin{aligned}
	\mu_j(z^*)&=k\left({z^*,\mathbf{z}}\right)K_{\sigma_j}^{-1}\mathbf{y}^{(j)} \\
	\Sigma_j(z^*)&=k\left(z^*,z^*\right)-k\left({z^*,\mathbf{z}}\right)K_{\sigma_j}^{-1}k\left({\mathbf{z},z^*} \right)
	\end{aligned}
\end{equation}
where $K_{\sigma_j}=k\left({\mathbf{z},\mathbf{z}}\right)+\sigma_j^2 I$.

Based on the above analysis, the predictive distribution of $w(z^*)$ in dimension $a$ generated by MoGP can be expressed by combining the conditional distributions of the indicator variables from the gating network with the results from local GP regression, as shown in (\ref{gmm}).
\begin{equation}
\setlength{\abovedisplayskip}{3pt} 
\setlength{\belowdisplayskip}{3pt}
    \Pr(y^*\mid \mathcal{D}) = \sum_{j=1}^M \gamma_{a,j}(z^*)\mathcal{N}\left(y^*\mid \mu_{a,j}(z^*), \Sigma_{a,j}(z^*)\right),\label{gmm}
\end{equation}
where $M$ is the number of experts, $\mathcal{D}$ represents the training dataset, and $\gamma_{a,j}$ denotes the weight of the corresponding Gaussian component obtained by Dirichlet process. 
This formulation effectively captures the potential multimodalities in the latent function. 
However, errors between the predictive distribution and the true underlying distribution still exists. To mitigate this uncertainty, we construct an ambiguity set in the following section.
 \vspace{0.2em}

\section{CONTROLLER DESIGN}
With the MoGP model for $w(\cdot)$, we design a controller tailored for system (\ref{dynamics}) in this section. 
To decouple the impact of disturbances, we decompose the system state $x_k$ into nominal component $s_k$ and error component $e_k$.
 The feedback control law is designed as $u_k = Ke_k + v_k$, where $v_k$ is the nominal input and $K$ is a feedback gain matrix computed offline, satisfying that $A+BK$ is stable. Employing this control law, the corresponding nominal and error dynamics are given below \cite{c3}.
\begin{subequations}
\setlength{\abovedisplayskip}{1pt} 
\setlength{\belowdisplayskip}{3pt}
	\begin{align}
	s_{k+1} &=As_k+Bv_k, \label{nominal}\\
	e_{k+1} &=(A+BK)e_k+w(x_k), \label{error}
	\end{align}
\end{subequations}


\subsection{Ambiguity Set Construction}
To formulate the DR-CVaR constraints on system states, we construct a state-dependent ambiguity set that accounts for potential multimodalities by leveraging the MoGP model for $w(\cdot)$ in each dimension.
Particularly, given a specific system state $x^*$ at time $t$, suppose that the predictive distribution for $w(x^*)$ in dimension $\tau$, which consists of $M_\tau$ state-dependent Gaussian components, can be written as
\begin{equation}
\setlength{\abovedisplayskip}{1pt} 
\setlength{\belowdisplayskip}{3pt}
    \sum_{j=1}^{M_{\tau}} \gamma_{\tau,t,j}\mathcal{N}\left(w\mid \mu_{\tau,j}(x^*), \Sigma_{\tau,j}(x^*) \right).\label{GMM}
\end{equation}

Considering the structure of modalities and independence between each mode, we establish local ambiguity sets using the state-dependent mean and variance information from the Gaussian components in (\ref{GMM}). The overall ambiguity set $\mathcal{W}_{\tau}(x^*)$ is then constructed as a weighted Minkowski sum of these local ambiguity sets. Explicitly, each local state-dependent ambiguity set $\mathcal{W}_{\tau,j}$ is defined below.
\begin{equation}
\begin{aligned}
       &\mathcal{W}_{\tau,j}\left(\mathbb{W},x^*\right) \triangleq \\
       &\left\{\rho\in\mathcal{M}_{+} \left|  \begin{aligned}
&\int_{\mathbb{W}}\rho\left(d\xi\right)=1\\
&\int_{\mathbb{W}}\xi\cdot\rho\left(d\xi\right)=\mu_{\tau,j}(x^*)\\
&\int_{\mathbb{W}}\left(\xi-\mu_{\tau,j}(x^*)\right)^2\cdot\rho\left(d\xi\right)\leq\Sigma_{\tau,j}(x^*)
    \end{aligned}\right.\right\}
\end{aligned} \label{local_as}
\end{equation}
where $\mathcal{M}_+$ represents the set of positive Borel measure on $\mathbb{R}$, and $\rho$ is a positive measure. The proposed ambiguity set $\mathcal{W}_\tau(x^*)$ is defined as 
\begin{equation}
    \sum_{j=1}^{M_{\tau}} \gamma_{\tau,t,j}\mathcal{W}_{\tau,j}(\mathbb{W},x^*). \label{mixed ambiguity set}
\end{equation} 

This construction closely aligns with the potentially multimodal structure, allowing for a more flexible and intricate representation of state-dependent disturbances. 

\subsection{Constraint Formulation}
Define $x_{k|t}$ as the $k$ step ahead predicted system state based on the current state $x_t$, while $x_{0|t}=x_t$ is the given initial state. To ensure recursive feasibility,  we adopt a hybrid constraint tightening strategy that combines distributionally robust and worst-case tightening methods.

For $x_{1|t}$, the corresponding DR-CVaR constraint can be expressed as follows \cite{van2015distributionally}.
\begin{equation}
\setlength{\abovedisplayskip}{1pt} 
\setlength{\belowdisplayskip}{3pt}
\begin{gathered}
        \sup_{\mathbb{Q}\in Q_{\tau}(x_{1|t})}\mathbb{Q}\text{-CVaR}_{\varepsilon}\left\{(-1)^{i}\left(x_{1|t}^{(\tau)}-\Tilde{x}_i^{(\tau)}\right)\right\}\leq0 ,\\
    i\in\{1,2\},\tau\in\{1,\ldots,n\}\label{first}
\end{gathered}
\end{equation}

Noting that $x_{1|t} = (A+BK)x_{0|t}-BKs_{0|t} + Bv_{0|t}+w(x_{0|t})$, since the state-dependent term $w(x_{0|t})$ is the only source of uncertainty, the ambiguity set $Q_{\tau}(x_{1|t})$ is essentially a shifted version of the ambiguity set for $w(x_{0|t})^{(\tau)}$, denoted as $\mathcal{W}_\tau(x_{0|t})$. Hence, it can be defined in the same manner as (\ref{mixed ambiguity set}). Based on (\ref{mixed ambiguity set}), we consequently prove that Constraint (\ref{first}) can be equivalently reformulated as second-order cone constraints in the subsequent section.

Given the support set $\mathbb{W}$, we define $\mathcal{Z}$ as the minimal Robust Positively Invariant (mRPI) set for the error dynamics in (\ref{error}).
The remaining system constraints are then guaranteed by the tightened state constraints $s_{k|t} \in \mathbb{X} \ominus \mathcal{Z}, k \in \{2, \ldots, N-1\}$ and input constraints $v_{k|t} \in \mathbb{U} \ominus K\mathcal{Z}, k \in \{0, \ldots, N-1\}$ \cite{mayne2005robust}. Additionally, the constraint $x_t - s_{0|t} \in \mathcal{Z}$ is required to be satisfied. Finally, we impose the terminal constraint $s_{N|t} \in \mathcal{X}_f$, where $\mathcal{X}_f$ is defined in (\ref{terminal}).
\begin{equation}
    \mathcal{X}_f\triangleq\left\{s_{N|k}\in\mathbb{R}^n:
    \begin{gathered}
        s_{N+i|k}\in\mathbb{X}\ominus\mathcal{Z}, \\
        Ks_{N+i|k}\in\mathbb{U}\ominus K\mathcal{Z},
        \forall i\in\mathbb{N}
    \end{gathered}\right\}\label{terminal}
\end{equation}

Based on the above analysis, the formulation of MoGP-DR-MPC at time $t$, denoted as $\mathscr{P}(x_t)$, is written as
\begin{subequations}
	\begin{align}
\min_{s_{0|t},\mathbf{v}_t} &\sum_{k=0}^{N-1}\left(  \left\| s_{k|t} \right\|_{Q}^{2}+\left\| v_{k|t} \right\|_{R}^{2}\right)+\left\| s_{N|t} \right\|_{P}^{2} \notag \\
\text{s.t.}\quad &x_{0|t}=x_t,x_{0|t}-s_{0|t}\in\mathcal{Z} \label{cons_1} \\
& s_{k+1|t}=As_{k|t}+Bv_{k|t},\;k=0,\dots,N-1 \label{cons_2}\\
&\begin{gathered}
        \sup_{\mathbb{Q}\in Q_{\tau}(x_{1|t})}\mathbb{Q}\text{-CVaR}_{\varepsilon}\left\{(-1)^{i}\left(x_{1|t}^{(\tau)}-\Tilde{x}_i^{(\tau)}\right)\right\}\leq0\\
    i\in\{1,2\},\tau\in\{1,\ldots,n\}
\end{gathered}\label{cons_3}\\
&s_{k|t}\in \mathbb{X}\ominus \mathcal{Z},\;k=2,\dots ,N-1 \label{cons_4}\\  
&v_{k|t}\in\mathbb{U}\ominus K\mathcal{Z},\;k=0,\dots,N-1\label{cons_5}\\
&s_{N|t}\in\mathcal{X}_{f} \label{cons_6}
\end{align} 
\label{problem}\end{subequations}
where $\mathbf{v}_t = [v_{0|t} \ldots v_{N-1|t}]^\top$ and $P,Q,R$ are positive definite weight matrices, with $P$ determined through the Riccati equation. Obviously, Problem $\mathscr{P}(x_t)$ is nonconvex and cannot be solved directly due to the presence of constraint (\ref{cons_3}). We then derive its equivalent tractable form in the next section.

\subsection{Tractable MPC Formulation}
In this section, we demonstrate how to transform the critical and intractable DR-CVaR constraint (\ref{cons_3}) into a set of second-order cone constraints using the following theorem.

\begin{theorem}\label{cons_reformulation}
    For any $\tau\in\{1,\ldots,n\}$ and $i\in\{1,2\}$, the DR-CVaR constraint (\ref{cons_3}) is satisfied if and only if 
    \begin{equation}
    \begin{aligned}
        &(-1)^{i} \left[s_{1|t}^{(\tau)}+\left((A+BK)\cdot(x_t-s_{0|t})\right)^{(\tau)}\right]\\
            &\leq (-1)^{i}\Tilde{x}_i^{(\tau)} -\eta_{i,t}^{(\tau)},           
    \end{aligned} \label{ith inequality}
    \end{equation}
    where $\eta_{i,t}^{(\tau)}$ is obtained by the following second-order cone program.
    \begin{equation}
        \begin{aligned}
\min\; & \eta \\
\text { s.t. }\; & 
\varepsilon\beta+\sum_{j=1}^{M_{\tau}(x_{0|t})} \gamma_{\tau,t,j}\left\{
\begin{gathered}
    t_{j}+\mu_{\tau,j}(x_{0|t})\omega_{j}+ \\
    \left[\Sigma_{\tau,j}(x_{0|t})+\mu^2_{\tau,j}(x_{0|t})\right] \Omega_{j}
\end{gathered}
\right\} \leq 0,
\\
& \left\|\begin{array}{c}
\omega_{j}+\varphi_{1,j}-\varphi_{2,j} \\ 
\Omega_j - t_{j} +\Tilde{w}_2^{(\tau)}\varphi_{1,j}-\Tilde{w}_1^{(\tau)}\varphi_{2,j}
\end{array}\right\|_2 \leq \Gamma_{1,j}
,  \\
\\
&\left\|\begin{array}{c}
\omega_{j}-(-1)^i+\phi_{1,j}-\phi_{2,j} \\
\Omega_j- t_{j}-\beta-\eta+\Tilde{w}_2^{(\tau)}\phi_{1,j}-\Tilde{w}_1^{(\tau)}\phi_{2,j}
\end{array}\right\|_2 \leq \Gamma_{2,j},  
\\
&\Omega_{j} \geq 0, \varphi_{1,j} \geq 0,\varphi_{2,j} \geq 0, \phi_{1,j} \geq 0,\phi_{2,j} \geq 0,\\
&\forall j\in\{1,\ldots,M_{\tau}(x_{0|t})\}.
\end{aligned} \label{theorem1's optimization probelm}
    \end{equation}
    where $$
    \begin{aligned}
        \Gamma_{1,j} &= \Omega_j + t_{j} -\Tilde{w}_2^{(\tau)}\varphi_{1,j} + \Tilde{w}_1^{(\tau)}\varphi_{2,j}, \\
    \Gamma_{2,j} &= \Omega_j+ t_{j}+\beta+\eta-\Tilde{w}_2^{(\tau)}\phi_{1,j}+\Tilde{w}_1^{(\tau)}\phi_{2,j}.
    \end{aligned}
    $$
    
    Stacking the $2n$ inequalities defined as (\ref{ith inequality}), the DR-CVaR constraints (\ref{cons_3}) are equivalent to 
    $s_{1|t}+(A+BK)(x_t-s_{0|t})\in \left[\Tilde{x}_1+\eta_{1,t}, \Tilde{x}_2-\eta_{2,t}\right]$.
\end{theorem}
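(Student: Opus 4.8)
The plan is to establish the equivalence in two stages: first reduce the DR-CVaR constraint on $x_{1|t}^{(\tau)}$ to a DR-CVaR constraint on $w(x_{0|t})^{(\tau)}$ plus a deterministic shift, and then dualize the worst-case CVaR over the mixture ambiguity set $\mathcal{W}_\tau(x_{0|t})$ into the second-order cone program that defines $\eta_{i,t}^{(\tau)}$. For the first stage, I would substitute the identity $x_{1|t} = (A+BK)x_{0|t} - BKs_{0|t} + Bv_{0|t} + w(x_{0|t})$ and use $s_{1|t} = As_{0|t} + Bv_{0|t}$ to rewrite the deterministic part as $s_{1|t} + (A+BK)(x_t - s_{0|t})$. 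Since CVaR is translation-equivariant, $\mathbb{Q}\text{-CVaR}_\varepsilon\{(-1)^i(x_{1|t}^{(\tau)} - \tilde x_i^{(\tau)})\}$ equals $(-1)^i[s_{1|t}^{(\tau)} + ((A+BK)(x_t - s_{0|t}))^{(\tau)} - \tilde x_i^{(\tau)}]$ plus $\mathbb{Q}\text{-CVaR}_\varepsilon\{(-1)^i w(x_{0|t})^{(\tau)}\}$; taking the supremum over $\mathbb{Q}$ and rearranging gives exactly (\ref{ith inequality}) with $\eta_{i,t}^{(\tau)} \triangleq \sup_{\mathbb{Q}} \mathbb{Q}\text{-CVaR}_\varepsilon\{(-1)^i w(x_{0|t})^{(\tau)}\}$.

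For the second stage I would invoke the Rockafellar–Uryasev variational representation $\text{CVaR}_\varepsilon(L) = \inf_{\beta}\{\beta + \varepsilon^{-1}\mathbb{E}[(L-\beta)_+]\}$, so that $\eta_{i,t}^{(\tau)} = \inf_\beta\{\beta + \varepsilon^{-1}\sup_{\rho \in \mathcal{W}_\tau(x_{0|t})} \int_{\mathbb{W}} ((-1)^i\xi - \beta)_+ \rho(d\xi)\}$ (absorbing the $\varepsilon$ into the constraint as written). Because $\mathcal{W}_\tau(x_{0|t}) = \sum_j \gamma_{\tau,t,j}\mathcal{W}_{\tau,j}$ is a weighted Minkowski sum of the local moment-based ambiguity sets (\ref{local_as}), the inner supremum decomposes additively over the $M_\tau$ components, each an instance of the classical moment problem $\sup_{\rho \in \mathcal{W}_{\tau,j}} \int_{\mathbb{W}} \max\{0, a\xi - b\} \rho(d\xi)$ with a first- and second-moment constraint on the box $\mathbb{W} = [\tilde w_1^{(\tau)}, \tilde w_2^{(\tau)}]$. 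I would take the Lagrangian/conic dual of each such component problem: the dual variable $t_j$ pairs with the normalization, $\omega_j$ with the mean equality, $\Omega_j \ge 0$ with the variance inequality, and the nonnegative multipliers $\varphi_{1,j},\varphi_{2,j},\phi_{1,j},\phi_{2,j}$ enforce the two linear pieces of the max against the box endpoints $\tilde w_1^{(\tau)}, \tilde w_2^{(\tau)}$. Strong duality holds by a Slater-type argument since the predictive Gaussian restricted to $\mathbb{W}$ lies in the relative interior of $\mathcal{W}_{\tau,j}$. Collecting the dual feasibility conditions — which are quadratic in $\xi$ and hence sum-of-squares/PSD conditions on $2\times 2$ matrices, equivalently the rotated-cone norm inequalities shown — yields precisely (\ref{theorem1's optimization probelm}); the term $[\Sigma_{\tau,j} + \mu_{\tau,j}^2]\Omega_j$ is the raw second moment bound $\mathbb{E}[\xi^2]\le \Sigma_{\tau,j}+\mu_{\tau,j}^2$ paired with $\Omega_j$, and $\mu_{\tau,j}\omega_j$ is the mean paired with $\omega_j$.

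Finally, I would assemble the $2n$ scalar inequalities (\ref{ith inequality}) — one per dimension $\tau$ and per side $i\in\{1,2\}$ — and observe that they are equivalent to the single vector containment $s_{1|t} + (A+BK)(x_t - s_{0|t}) \in [\tilde x_1 + \eta_{1,t}, \tilde x_2 - \eta_{2,t}]$, reading off componentwise. The main obstacle I anticipate is the rigorous derivation of the conic dual in the second stage: verifying that the moment problem over the box attains its value, justifying strong duality with no duality gap, and then massaging the dual constraints (which naturally appear as ``$a\xi^2 + b\xi + c \ge 0$ on $[\tilde w_1^{(\tau)}, \tilde w_2^{(\tau)}]$'' for the two affine pieces of the hinge loss) into the exact SOC form with the auxiliary quantities $\Gamma_{1,j}, \Gamma_{2,j}$; the bookkeeping that matches each multiplier to the correct endpoint and sign is delicate, and an S-lemma / nonnegative-polynomial-on-an-interval characterization is the cleanest route. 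Everything else — translation-equivariance of CVaR, additivity of the support function over Minkowski sums, and the final componentwise repackaging — is routine.
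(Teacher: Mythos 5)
Your proposal is correct and follows essentially the same route as the paper's proof: the Rockafellar--Uryasev representation of CVaR, decomposition of the worst-case expectation over the mixture components of the moment-based ambiguity set, conic duality for each component's moment problem, and reduction of the resulting nonnegativity-of-a-quadratic-on-an-interval conditions to second-order cone constraints. The only (cosmetic) difference is that you factor out the deterministic shift via translation-equivariance of CVaR and define $\eta_{i,t}^{(\tau)}$ as the worst-case CVaR of $(-1)^i w(x_{0|t})^{(\tau)}$ up front, whereas the paper carries the term $C(i,\tau)$ through the dualization and substitutes $C(i,\tau)=-\eta$ at the end; the two are equivalent.
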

\begin{proof}
    According to the definition of CVaR, the left side of Constraint (\ref{cons_3}) can be rewritten as follows.
    \begin{align}
    \setlength{\abovedisplayskip}{1pt} 
\setlength{\belowdisplayskip}{3pt}
&\sup_{\mathbb{Q}\in Q_{\tau}(x_{1|t})}\mathbb{Q}\text{-CVaR}_{\varepsilon}\left\{(-1)^{i}\left(x_{1|t}^{(\tau)}-\Tilde{x}_i^{(\tau)}\right)\right\} \label{CVaR def} \\
=&\inf_{\beta\in\mathbb{R}}\left\{\beta+\frac{1}{\varepsilon}\sup_{\mathbb{P}\in\mathcal{W}_{\tau}(x_{0|t})}\mathbb{E}_\mathbb{P}\left(
\begin{gathered}
    (-1)^i\left(z_t^{(\tau)}-\Tilde{x}_i^{(\tau)}\right)\\
    +(-1)^i w(x_{0|t})^{(\tau)}-\beta
\end{gathered}
\right)^+\right\} \notag
\end{align}
    where $z_t=(A+BK)(x_t-s_{0|t})+ s_{1|t}$. To simplify the notation, we denote $(-1)^i\left(z_t^{(\tau)}-\Tilde{x}_i^{(\tau)}\right)$ as $C(i,\tau)$. Given the initial state $x_{0|t}=x_t$, the ambiguity set $\mathcal{W}_{\tau}(x_{0|t})$ is defined similarly as (\ref{mixed ambiguity set}). Then the worst-case expectation term $\sup_{\mathbb{P}\in\mathcal{W}_{\tau}(x_{0|t})}\mathbb{E}_\mathbb{P}\left(C(i,\tau)+(-1)^i w(x_{0|t})^{(\tau)}-\beta\right)^+$ can be reformulated as follows \cite{c5} \cite{hanasusanto2015distributionally}.

        \begin{align}\sup_{\boldsymbol{\rho}} &\sum_{j=1}^{M_{\tau}(x_{0|t})}\gamma_{\tau,t,j}\int_{\mathcal{W}_{\tau}(x_{0|t})}\left\{C(i,\tau)+(-1)^i\xi-\beta\right\}^+ \rho_j\left(d\xi\right) \notag\\ 
         \text{s.t.} 
         &\begin{aligned}
             &\int_{\mathcal{W}_{\tau}(x_{0|t})}\rho_j\left(d\xi\right)=1\\
       &\int_{\mathcal{W}_{\tau}(x_{0|t})}\xi \rho_j\left(d\xi\right)=\mu_{\tau,j}(x_{0|t})\\&\int_{\mathcal{W}_{\tau}(x_{0|t})}(\xi-\mu_{\tau,j}(x_{0|t}))^2\rho_j\left(d\xi\right)\leq\Sigma_{\tau,j}(x_{0|t})
         \end{aligned}\label{mixed-1}
        \end{align}

     Obviously, Problem (\ref{mixed-1}) cannot be solved directly, so we take its dual, which is formulated below.
    \begin{equation}
        \begin{aligned}
\min_{t_{j},\omega_{j},\Omega_{j}}\; & \sum_{j=1}^{M_{\tau}(x_{0|t})}\gamma_{\tau,t,j}\left\{
\begin{gathered}
    t_{j}+\mu_{\tau,j}(x_{0|t})\omega_{j}\\
    +\left[\Sigma_{\tau,j}(x_{0|t})+\mu_{\tau,j}^2(x_{0|t})\right]\cdot \Omega_{j}
\end{gathered}
\right\} \\
\text{s.t.}\; & t_{j}+\xi\omega_{j}+\xi^2\Omega_{j}\geq0,   \\
&-\left(C(i,\tau)+(-1)^i\xi-\beta\right)+t_{j}
+\xi\omega_{j}+\xi^2\Omega_{j}\geq 0, \\
&\Omega_{ij}\geq0, \Tilde{w}_1^{(\tau)}\leq\xi\leq \Tilde{w}_2^{(\tau)} ,\\
&\forall 1\leq j\leq M_{\tau}(x_{0|t}).
\end{aligned} \label{sup-dual}
    \end{equation}
    
    The semi-infinite constraints in (\ref{sup-dual}) can be transformed into linear matrix inequalities by duality theory and Schur complements \cite{boyd2004convex}. For brevity, the exact form is omitted. Since the resulting matrices are two-dimensional, they can be further rewritten as second-order cone constraints \cite{10287922}. Thus, Constraint (\ref{cons_3}) is equivalent to the following set of constraints.
    \begin{equation}
        \begin{aligned}
&\varepsilon\beta+\sum_{j=1}^{M_{\tau}(x_{0|t})} \gamma_{\tau,t,j}\left\{
\begin{gathered}
    t_{j}+\mu_{\tau,j}(x_{0|t})\omega_{j}+ \\
    \left[\Sigma_{\tau,j}(x_{0|t})+\mu^2_{\tau,j}(x_{0|t})\right] \Omega_{j}
\end{gathered}
\right\} \leq 0, \\
&\left\|\begin{array}{c}
\omega_{j}+\varphi_{1,j}-\varphi_{2,j} \\ 
\Omega_j - t_{j} +\Tilde{w}_2^{(\tau)}\varphi_{1,j}-\Tilde{w}_1^{(\tau)}\varphi_{2,j}
\end{array}\right\|_2 \leq \Tilde{\Gamma}_{1,j}
,  \\
\\
&\left\|\begin{array}{c}
\omega_{j}-(-1)^i+\phi_{1,j}-\phi_{2,j} \\
\Omega_j- t_{j}-\beta+C(i,\tau)+\Tilde{w}_2^{(\tau)}\phi_{1,j}-\Tilde{w}_1^{(\tau)}\phi_{2,j}
\end{array}\right\|_2 \leq \Tilde{\Gamma}_{2,j} 
\\
&\Omega_{j} \geq 0, \varphi_{1,j} \geq 0,\varphi_{2,j} \geq 0, \phi_{1,j} \geq 0,\phi_{2,j} \geq 0,\\
&\forall 1\leq j\leq M_{\tau}(x_{0|t}) \label{constraints_set}
\end{aligned}
    \end{equation}
    where $$\begin{aligned}
        \Tilde{\Gamma}_{1,j} &= \Omega_j + t_{j} -\Tilde{w}_2^{(\tau)}\varphi_{1,j} + \Tilde{w}_1^{(\tau)}\varphi_{2,j},\\
    \Tilde{\Gamma}_{2,j} &= \Omega_j+ t_{j}+\beta-C(i,\tau)-\Tilde{w}_2^{(\tau)}\phi_{1,j}+\Tilde{w}_1^{(\tau)}\phi_{2,j}.
    \end{aligned}
    $$
    
    Substituting (\ref{constraints_set}) into (\ref{CVaR def}) gives (\ref{theorem1's optimization probelm}),  which completes the proof.
\end{proof}

Theorem \ref{cons_reformulation} provides the convex reformulation alternatives of Constraint (\ref{cons_3}). It follows that the optimization problem $\mathscr{P}(x_t)$ can be reformulated into the convex and tractable form below, which is denoted as $\mathscr{P}_{MoDR}(x_t)$.
\begin{subequations}
\setlength{\abovedisplayskip}{2pt} 
\setlength{\belowdisplayskip}{2pt}
	\begin{align}
\min_{s_{0|t},\mathbf{v}_t} &\sum_{k=0}^{N-1}\left(  \left\| s_{k|t} \right\|_{Q}^{2}+\left\| v_{k|t} \right\|_{R}^{2}\right)+\left\| s_{N|t} \right\|_{P}^{2} \notag \\
\text{s.t.}\quad &x_{0|t}-s_{0|t}\in\mathcal{Z} ,\label{invariant} \\
& s_{k+1|t}=As_{k|t}+Bv_{k|t},\;k=0,\dots,N-1 \label{final_2}\\
&
\begin{aligned}
s_{1|t}+&(A+BK)(x_t-s_{0|t}),\\ &\in \left[\Tilde{x}_1+\eta_{1,t}, \Tilde{x}_2-\eta_{2,t}\right]
\end{aligned}
\label{final_3}\\
&s_{k|t}\in\mathbb{X}\ominus\mathcal{Z},\;k=2,\dots ,N-1 \label{final_4}\\  
&v_{k|t}\ominus \mathbb{U}\ominus K\mathcal{Z},\;k=0,\dots,N-1\label{final_5}\\
&s_{N|t}\in\mathcal{X}_f.\label{final_6}
\end{align} \label{final_problem}
\end{subequations}

If $\mathscr{P}_{MoDR}(x_t)$ is feasible at time $t$, let $s_{0|t}^*$ and $\mathbf{v}_t^*=[v_{0|t}^*,\ldots,v_{N-1|t}^*]^\top$ represent its optimal solution. 
At each time instant, the MoGP-DR-MPC framework solves $\mathscr{P}_{MoDR}(x_t)$ to obtain the optimal input sequence and only the first element $u_t^*=K(x_t-s_{0|t}^*) + v_{0|t}^*$ is executed. This process is repeated to perform receding horizon control.

\section{THEORETICAL PROPERTIES}
In this section, we establish the recursive feasibility and stability of MoGP-DR-MPC through the following theorems.

\begin{theorem}[Recursive feasibility] \label{recursive_feasibilty}
    If problem $\mathscr{P}_{MoDR}(x_t)$ is feasible with state $x_t$, then $\mathscr{P}_{MoDR}(x_{t+1})$ is recursively feasible with $u_t=K(x_t-s_{0|t}^*) + v_{0|t}^*$.
\end{theorem}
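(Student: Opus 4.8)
The plan is the standard shift-and-append argument for tube-based MPC, tailored to the hybrid (distributionally robust plus worst-case) constraint tightening used here. Let $(s_{0|t}^{*},\mathbf{v}_{t}^{*})$ be an optimal solution of $\mathscr{P}_{MoDR}(x_t)$ with induced nominal trajectory $s_{0|t}^{*},\dots,s_{N|t}^{*}$, and let $x_{t+1}=Ax_t+Bu_t+w(x_t)$ be the realized successor under $u_t=K(x_t-s_{0|t}^{*})+v_{0|t}^{*}$. As a candidate for $\mathscr{P}_{MoDR}(x_{t+1})$ I will take $s_{0|t+1}:=s_{1|t}^{*}$ together with the shifted input sequence $v_{k|t+1}:=v_{k+1|t}^{*}$ for $k=0,\dots,N-2$ and $v_{N-1|t+1}:=Ks_{N|t}^{*}$. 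The nominal recursion (\ref{final_2}) then gives $s_{k|t+1}=s_{k+1|t}^{*}$ for $k\le N-1$ and $s_{N|t+1}=(A+BK)s_{N|t}^{*}$, and it remains to verify the constraints of $\mathscr{P}_{MoDR}(x_{t+1})$ against this candidate.

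The inherited constraints follow by re-indexing: (\ref{final_2}) holds by construction, (\ref{final_4}) and the $k\le N-2$ part of (\ref{final_5}) from feasibility at $t$ with $s_{N|t}^{*}\in\mathcal{X}_f\subseteq\mathbb{X}\ominus\mathcal{Z}$ covering the boundary index, $v_{N-1|t+1}=Ks_{N|t}^{*}\in\mathbb{U}\ominus K\mathcal{Z}$ directly from the definition (\ref{terminal}) of $\mathcal{X}_f$, and (\ref{final_6}) from the $(A+BK)$-invariance of $\mathcal{X}_f$ that (\ref{terminal}) encodes. For (\ref{invariant}) I will expand $x_{t+1}-s_{0|t+1}=(A+BK)(x_t-s_{0|t}^{*})+w(x_t)$ and apply the robust positive invariance of $\mathcal{Z}$ for the error dynamics (\ref{error}), namely $(A+BK)\mathcal{Z}\oplus\mathbb{W}\subseteq\mathcal{Z}$, using $x_t-s_{0|t}^{*}\in\mathcal{Z}$ and $w(x_t)\in\mathbb{W}$.

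The delicate step is constraint (\ref{final_3}) at $t+1$, because the margins $\eta_{1,t+1},\eta_{2,t+1}$ are recomputed from the MoGP predictive distribution at $x_{t+1}$ and are not inherited from time $t$. My plan is to show the candidate's quantity $z_{t+1}:=s_{1|t+1}+(A+BK)(x_{t+1}-s_{0|t+1})$ lies in the worst-case-tightened box $\mathbb{X}\ominus\mathbb{W}$, and then that $\mathbb{X}\ominus\mathbb{W}$ is contained in the DR-CVaR box $[\Tilde{x}_1+\eta_{1,t+1},\Tilde{x}_2-\eta_{2,t+1}]$. For the first inclusion: $s_{1|t+1}=s_{2|t}^{*}\in\mathbb{X}\ominus\mathcal{Z}$ (again invoking $\mathcal{X}_f\subseteq\mathbb{X}\ominus\mathcal{Z}$ for short horizons), $x_{t+1}-s_{0|t+1}\in\mathcal{Z}$ from the previous paragraph, and the invariance identity rewritten as $(A+BK)\mathcal{Z}\subseteq\mathcal{Z}\ominus\mathbb{W}$ yields $z_{t+1}\in(\mathbb{X}\ominus\mathcal{Z})\oplus(\mathcal{Z}\ominus\mathbb{W})\subseteq\mathbb{X}\ominus\mathbb{W}$. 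For the second inclusion I will use Theorem \ref{cons_reformulation} to return to the equivalent original form (\ref{cons_3}): every $\mathbb{Q}\in Q_\tau(x_{1|t+1})$ is supported on $z_{t+1}^{(\tau)}+[\Tilde{w}_1^{(\tau)},\Tilde{w}_2^{(\tau)}]$ in coordinate $\tau$, so, since CVaR never exceeds the essential supremum of the same random variable, $\mathbb{Q}\text{-CVaR}_{\varepsilon}\{(-1)^{i}(x_{1|t+1}^{(\tau)}-\Tilde{x}_i^{(\tau)})\}\le\max_{\Tilde{w}_1^{(\tau)}\le\xi\le\Tilde{w}_2^{(\tau)}}(-1)^{i}(z_{t+1}^{(\tau)}+\xi-\Tilde{x}_i^{(\tau)})$, and this maximum is nonpositive for both $i\in\{1,2\}$ precisely because $z_{t+1}\in\mathbb{X}\ominus\mathbb{W}$; taking the supremum over $\mathbb{Q}$ gives (\ref{cons_3}), hence (\ref{final_3}), at $t+1$.

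Putting the pieces together shows the shifted candidate is feasible for $\mathscr{P}_{MoDR}(x_{t+1})$, which proves the theorem. I expect the main obstacle to be exactly the consistency between the two tightening mechanisms in the third paragraph: that a constraint tightened by the full mRPI set $\mathcal{Z}$, once propagated one step through $A+BK$, still lies inside the comparatively looser CVaR-tightened set. This hinges on the invariance identity $(A+BK)\mathcal{Z}\oplus\mathbb{W}\subseteq\mathcal{Z}$ and on the boundedness of the disturbance support, which caps the worst-case CVaR margins $\eta_{i,t+1}$.
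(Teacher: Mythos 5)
Your proposal is correct and follows the same skeleton as the paper's proof: the shifted candidate $\Lambda_{t+1}=[s_{1|t}^*,v_{1|t}^*,\dots,Ks_{N|t}^*]$, robust positive invariance of $\mathcal{Z}$ to recover $x_{t+1}-s_{1|t}^*\in\mathcal{Z}$, re-indexing for the tightened state/input constraints, and the definition of $\mathcal{X}_f$ for the terminal step. The substantive difference is in how the DR-CVaR constraint (\ref{final_3}) at time $t+1$ is handled. The paper simply asserts that constraints (\ref{final_2})--(\ref{final_5}) are "satisfied given the feasibility of $\Lambda_t^*$", which does not actually address (\ref{final_3}): the margins $\eta_{1,t+1},\eta_{2,t+1}$ are recomputed from the MoGP predictive distribution at the realized $x_{t+1}$ and are not inherited from feasibility at time $t$. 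You correctly identify this as the delicate step and close it with a genuine argument: the chain $z_{t+1}\in(\mathbb{X}\ominus\mathcal{Z})\oplus(A+BK)\mathcal{Z}\subseteq(\mathbb{X}\ominus\mathcal{Z})\oplus(\mathcal{Z}\ominus\mathbb{W})\subseteq\mathbb{X}\ominus\mathbb{W}$ (valid since the mRPI property $(A+BK)\mathcal{Z}\oplus\mathbb{W}\subseteq\mathcal{Z}$ is exactly $(A+BK)\mathcal{Z}\subseteq\mathcal{Z}\ominus\mathbb{W}$), combined with the observation that every measure in the ambiguity set is supported on $\mathbb{W}$ by (\ref{local_as}), so $\mathbb{Q}$-CVaR is dominated by the essential supremum and hence by the robust worst case, which is nonpositive on $\mathbb{X}\ominus\mathbb{W}$; Theorem \ref{cons_reformulation} then converts this back into (\ref{final_3}). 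This buys a complete justification of the step the paper leaves implicit, at the mild cost of the case distinction for short horizons ($s_{2|t}^*\in\mathcal{X}_f\subseteq\mathbb{X}\ominus\mathcal{Z}$ when $N=2$), which you also handle. In short: same route, but your version fills a real gap in the published argument.
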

\begin{proof}
    Given the optimal sequences $\mathbf{v}_t^*$ and $\mathbf{s}_t^*$, from Constraint (\ref{invariant}), we know that $e_t = x_t-s_{0|t}^*\in\mathcal{Z}$. By the invariant property of $\mathcal{Z}$, we have $e_{t+1} = (A+BK)e_t + w(x_t)\in\mathcal{Z}$ for all possible $w(x_t)\in\mathbb{W}$. Define the solution sequence of problem $\mathscr{P}_{MoDR}(x_{t})$ as $\Lambda_t^* = [s_{0|t}^*, v_{0|t}^*,\ldots,v_{N-1|t}^*]$. We then shift $\Lambda_t^*$ to generate the candidate solution for problem $\mathscr{P}_{MoDR}(x_{t+1})$. In particular, $\Lambda_{t+1}=[s_{1|t}^*, v_{1|t}^*,\ldots, Ks_{N|t}^*]$.

    Since $e_{t+1}\in\mathcal{Z}$, $x_{t+1}-s_{1|t}^*\in \mathcal{Z}$, i.e. Constraint (\ref{invariant}) is satisfied. Given the feasibility of $\Lambda_t^*$, Constraints (\ref{final_2})-(\ref{final_5}) are satisfied by $[s_{1|t}^*,\ldots,(A+BK)s_{N|t}^*]$ and $[v_{1|t}^*,\ldots,v_{N-1|t}^*]$. According to the definition of $\mathcal{X}_f$, $(A+BK)s_{N|t}^*\in\mathcal{X}_f$ and $Ks_{N|t}^*\in \mathbb{U}\ominus K\mathcal{Z}$, so that the terminal constraint (\ref{final_6}) is also satisfied. Therefore, $\Lambda_{t+1}$ is feasible at time $t+1$. 
\end{proof}


\begin{theorem}[Closed-loop stability] \label{closed-loop stability}
    If the initial problem $\mathscr{P}_{MoDR}(x_{0})$ is feasible, 
    the closed-loop system asymptotically converges to a neighborhood of the origin.
\end{theorem}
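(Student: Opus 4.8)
The plan is to use the optimal value function of $\mathscr{P}_{MoDR}$ as a Lyapunov function for the nominal subsystem and to show that it decreases along closed-loop trajectories modulo a residual term governed by the mRPI set $\mathcal{Z}$. Let $V^*(x_t)$ denote the optimal cost of $\mathscr{P}_{MoDR}(x_t)$, with optimizer $(s_{0|t}^*,\mathbf{v}_t^*)$ and associated nominal trajectory $s_{0|t}^*,\dots,s_{N|t}^*$. By Theorem~\ref{recursive_feasibilty} the shifted sequence $\Lambda_{t+1}=[s_{1|t}^*,v_{1|t}^*,\dots,v_{N-1|t}^*,Ks_{N|t}^*]$ is feasible for $\mathscr{P}_{MoDR}(x_{t+1})$, hence by optimality $V^*(x_{t+1})$ is upper bounded by the cost $V(\Lambda_{t+1})$ of this candidate.

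First I would evaluate $V(\Lambda_{t+1})-V^*(x_t)$. The stage costs of the candidate for indices $k=1,\dots,N-1$ coincide with those of the optimizer, so after cancellation the difference equals $-\|s_{0|t}^*\|_Q^2-\|v_{0|t}^*\|_R^2$ plus the terminal contribution $(s_{N|t}^*)^\top\big[Q+K^\top RK+(A+BK)^\top P(A+BK)-P\big]s_{N|t}^*$. Since $P$ solves the Riccati equation for $(A,B,Q,R)$ with stabilizing gain $K$, the identity $(A+BK)^\top P(A+BK)-P=-(Q+K^\top RK)$ makes the terminal contribution vanish (a $\preceq$ inequality would also suffice), which yields
\[
V^*(x_{t+1})-V^*(x_t)\le -\|s_{0|t}^*\|_Q^2-\|v_{0|t}^*\|_R^2\le-\lambda_{\min}(Q)\,\|s_{0|t}^*\|^2 .
\]

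Next, since $V^*\ge 0$, summing this inequality over $t=0,1,\dots$ gives $\sum_{t\ge 0}\|s_{0|t}^*\|^2\le V^*(x_0)/\lambda_{\min}(Q)<\infty$, which forces $s_{0|t}^*\to 0$ (and likewise $v_{0|t}^*\to 0$). Because Constraint~(\ref{invariant}) enforces $e_t=x_t-s_{0|t}^*\in\mathcal{Z}$ with $\mathcal{Z}$ a compact RPI set, we have $x_t=s_{0|t}^*+e_t$ with the first term vanishing, so the closed-loop state converges to $\mathcal{Z}$, i.e. to a neighborhood of the origin whose size is dictated by the support $\mathbb{W}$. Combining the decrease with the usual bounds $\alpha_1(\|s_{0|t}^*\|)\le V^*\le\alpha_2(\|s_{0|t}^*\|)$ on the feasible set then upgrades this to asymptotic convergence to that neighborhood.

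The main obstacle will be making the telescoping rigorous in this tube setting, where $s_{0|t}^*$ is a free decision variable rather than the measured state: one must verify that the nominal part of $\Lambda_{t+1}$ exactly reproduces the tail of the optimizer's nominal trajectory (so the cancellation is exact) and that re-optimization at $t+1$ can only lower the cost. A secondary point needing care is the terminal cost/terminal set pairing — confirming that $P$ from the Riccati equation certifies the decrease on $\mathcal{X}_f$, where the control $u=Ks$ is applied — together with checking that the offset $(A+BK)(x_t-s_{0|t}^*)$ in Constraint~(\ref{final_3}) does not disturb the cost telescoping, as it enters only the constraints and not the objective.
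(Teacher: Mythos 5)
Your proposal follows essentially the same route as the paper's proof: shift the optimizer to obtain a feasible candidate at $t+1$, cancel the common stage costs, use the Riccati property of $P$ to absorb the terminal terms, conclude $J_{t+1}^*-J_t^*\le -\|s_{0|t}^*\|_Q^2-\|v_{0|t}^*\|_R^2$, and sum to get $s_{0|t}^*\to 0$ and hence convergence of $x_t=s_{0|t}^*+e_t$ to the neighborhood $\mathcal{Z}$. Your write-up is in fact slightly more explicit than the paper's (which contains a sign typo in the summed inequality and leaves the final step $x_t\to\mathcal{Z}$ implicit), but the argument is the same.
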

\begin{proof}
    Consider $\Lambda_t^*$ at time $t$, the optimal objective value is denoted as $J_t^*$.
    Since candidate solution $\Lambda_{t+1}$ is feasible, the corresponding objective value is obtained by
    \begin{equation*}
        \begin{aligned}
        J_{t+1}&=
             \sum_{k=1}^{N-1}\left( \left\| s^{*}_{k|t} \right\|_{Q}^{2}+\left\| v^{*}_{k|t} \right\|_{R}^{2}\right)+\left\| s^{*}_{N|t} \right\|_{Q}^{2}\\
             &+\left\| Ks_{N|t}^{*} \right\|_{R}^{2}+\left\| (A+BK)s_{N|t}^{*} \right\|_{P}^{2}.
        \end{aligned}
    \end{equation*}
    
    Noting that $P$ is the solution of Riccati equation, we obtain the following inequality.
    \begin{equation}
        \begin{aligned}
            J_{t+1}^*&\leq J_{t+1} \\
            &\leq \sum_{k=0}^{N-1}\left(  \left\| s^*_{k|t} \right\|_{Q}^{2}+\left\| v^*_{k|t} \right\|_{R}^{2}\right)+\left\| s^*_{N|t} \right\|_{P}^{2}\\
            &=J_{t}^*
        \end{aligned}
    \end{equation}
    
    Therefore, $J_{t+1}^*-J_{t}^*\leq \left\| s^*_{0|t} \right\|_{Q}^{2}+\left\| v^*_{0|t} \right\|_{R}^{2}$. Adding this from $t=0$, we obtain $J_0^*-J_{\infty}^*\leq \sum_{t=0}^{\infty} \left\| s^*_{0|t} \right\|_{Q}^{2}+\left\| v^*_{0|t} \right\|_{R}^{2}$. Then we have $\lim_{t\to \infty} \left\| s^*_{0|t} \right\|_{Q}^{2}=0$ and $\lim_{t\to \infty} \left\| v^*_{0|t} \right\|_{R}^{2}=0$. This gives the convergence of system states.
\end{proof}

\section{CASE STUDIES}
This section presents two case studies to validate the effectiveness of the proposed MoGP-DR-MPC method. Its performance is compared against two baseline methods, namely GP-based DR-MPC (GP-DR-MPC) and robust tube MPC. 
\subsection{Numerical Experiments}
We first conduct numerical experiments on a stochastic system given in (\ref{numerical_system}). 
\begin{equation}
    x^+ = \begin{bmatrix}
        1 & 1\\ 0 &1 
    \end{bmatrix} x + \begin{bmatrix}
        0.5\\1
    \end{bmatrix} u + w(x) \label{numerical_system}
\end{equation}

Suppose the constraint sets are $\mathbb{X}=[-7,0]\times[-3,2]$ and $\mathbb{U}=[-5,5]$, with a tolerance level of $\varepsilon=0.2$. Disturbance $w(x)$ is supported by $[-0.8,0.8]\times [-0.8,0.8]$ and is defined using standard four-modal function i.e. Franke function \cite{franke1979critical} in each dimension. The control objective is to steer the initial state $[-5,-2]$ to the origin under the disturbance $w(x)$. Define the stage cost as $x^\top Q x + u^\top R u$, where $Q$ is identity matrix $I_2$ and $R=0.1$.
The horizon of MPC is set as $10$ and we perform 50 closed-loop simulations with a simulation horizon of 30 to showcase the improvement in control performance. 

The average closed-loop costs for MoGP-DR-MPC, GP-DR-MPC and robust tube MPC are 66.06, 79.74 and 89.22, respectively.
The simulation results indicate that MoGP-DR-MPC reduces the average closed-loop cost by 17\% compared with GP-DR-MPC and by 24\% compared with robust tube MPC, demonstrating its notable advantage in handling multimodal state-dependent disturbances.


\subsection{Simulations on a Quadrotor System}
In the second case study, we simulate a trajectory planning and control problem for a planar quadrotor system \cite{nakka2018six}. The quadrotor starts at $(10,10)$ and is expected to land at the origin. During its flight, it must remain within the safe region $\mathcal{X}$ with probability $0.8$ under a multimodal state-dependent wind disturbance, which is common due to varying airflow patterns. The linearized dynamics of the quadrotor system are expressed as follows.

\begin{equation}
\begin{gathered}
        s_{k+1|t} = A s_{k|t} + B u_{k|t} + w(s_{k|t}),
    \\
    A=\begin{bmatrix}1&\Delta t&0&0\\0&1&0&0\\0&0&1&\Delta t\\0&0&0&1\end{bmatrix}, B=\frac{1}{m}\begin{bmatrix}\frac{\Delta t^2}{2}&0\\\Delta t&0\\0&\frac{\Delta t^2}{2}\\0&\Delta t\end{bmatrix},
\end{gathered}
\end{equation}
where $\Delta t=1$ is the sampling time and $m$ represents the mass of the system. 
The state vector $s_{t}=[p_x, v_{x}, p_{y}, v_{y}]^\top\in\mathbb{R}^4$ includes the quadrotor's positions and velocities in the $x$ and $y$ directions, while the control input $u_t = [u_x, u_y]^\top\in [-7,7]^2\subseteq \mathbb{R}^2$ applies forces along these axes. The support set of the wind disturbance is represented by $\mathbb{W}=[-0.6,0.6]^4\subseteq \mathbb{R}^4$ and the safe region $\mathcal{X}$ is defined by
\begin{equation*}
    \left\{s:
    \begin{aligned}
        \begin{bmatrix}
        -4 &-4 &-4 &-4
    \end{bmatrix}^\top\leq s\leq   \begin{bmatrix}
        18  &4  &18  &4
    \end{bmatrix}^\top
\end{aligned}
\right\}.
\end{equation*}

In the simulations, the MPC time horizon $N$ is set to five and we evaluate the performance of MoGP-DR-MPC against two baseline approaches. Specifically, we analyze the closed-loop costs under 50 different sequences of disturbance realizations, each with 30 simulation steps. As illustrated in Fig \ref{fig:cost}, MoGP-DR-MPC achieves an average reduction in closed-loop cost of 4\% compared with GP-DR-MPC and 5\% compared with robust tube MPC. This corroborates the superior ability of MoGP-DR-MPC to capture the multimodal characteristics of disturbance data, yielding less conservative control performance.

 Additionally, because the optimal control problem is convex and the MoGP model is trained offline, the computation time for MoGP-DR-MPC is on par with that of conventional GP-MPC. Fig \ref{fig:traj} shows the position trajectories of the quadrotor generated by 30 simulations.

 \begin{figure}[htbp]
    \centering
    \includegraphics[width=0.9\linewidth]{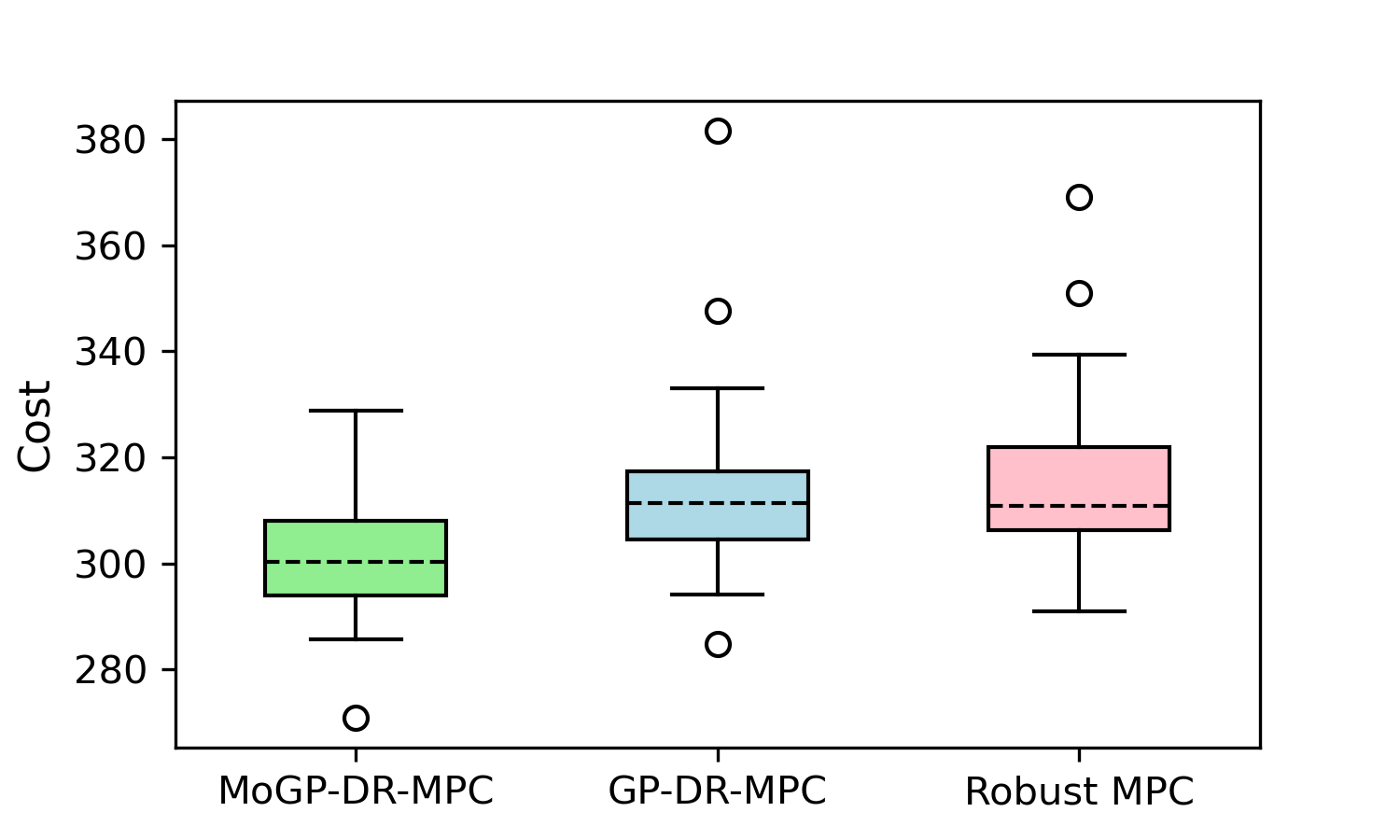}
    \caption{The closed-loop cost of MoGP-DR-MPC and two baseline methods under 50 realizations of disturbance sequences.}
    \label{fig:cost}
\end{figure}
\begin{figure}[htbp]
    \centering
    \includegraphics[width=0.8\linewidth]{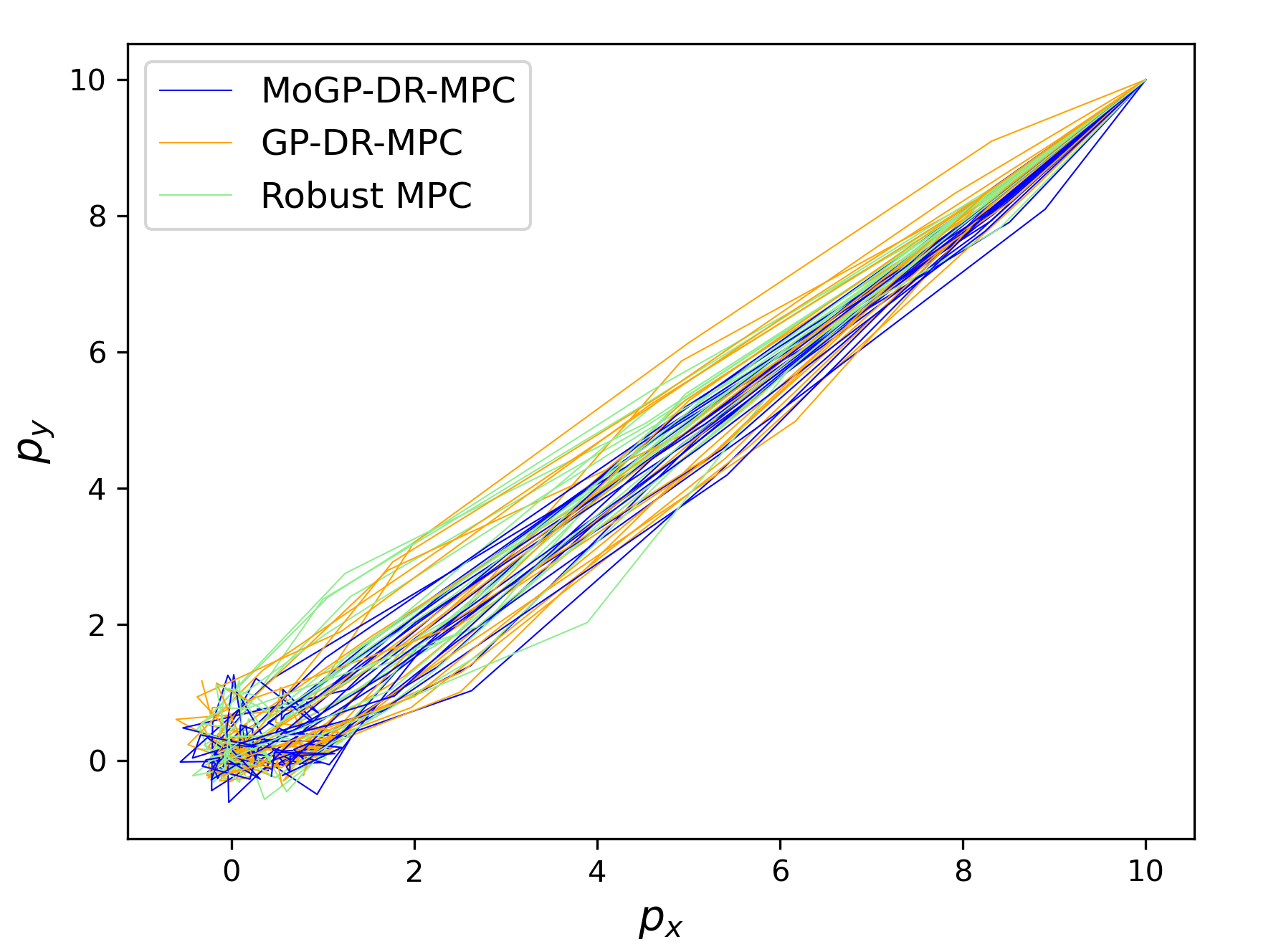}
    \caption{The closed-loop trajectories of the proposed MoGP-DR-MPC strategy and two baseline methods under different disturbance realizations.}
    \label{fig:traj}
\end{figure}
\section{CONCLUSIONS}
In this paper, we presented a novel MoGP-DR-MPC framework that seamlessly integrated MoGP to effectively control systems with state-dependent multimodal disturbances. Based on the MoGP model, we developed a data-driven state-dependent ambiguity set, which closely aligns with the multimodality structure. We reformulated the DR-CVaR constraints as scalable second-order cone constraints, ensuring computational tractability. The recursive feasibility and closed-loop stability of MoGP-DR-MPC were guranteed through invariant sets. Both numerical experiments and simulations on a quadrotor system verified the effectiveness of the proposed method in coping with multimodal disturbances.

\bibliographystyle{IEEEtran}
\bibliography{reference}

\begin{thebibliography}{10}
\providecommand{\url}[1]{#1}
\csname url@rmstyle\endcsname
\providecommand{\newblock}{\relax}
\providecommand{\bibinfo}[2]{#2}
\providecommand\BIBentrySTDinterwordspacing{\spaceskip=0pt\relax}
\providecommand\BIBentryALTinterwordstretchfactor{4}
\providecommand\BIBentryALTinterwordspacing{\spaceskip=\fontdimen2\font plus
\BIBentryALTinterwordstretchfactor\fontdimen3\font minus \fontdimen4\font\relax}
\providecommand\BIBforeignlanguage[2]{{%
\expandafter\ifx\csname l@#1\endcsname\relax
\typeout{** WARNING: IEEEtran.bst: No hyphenation pattern has been}%
\typeout{** loaded for the language `#1'. Using the pattern for}%
\typeout{** the default language instead.}%
\else
\language=\csname l@#1\endcsname
\fi
#2}}

\bibitem{berkenkamp2015safe}
F.~Berkenkamp and A.~P. Schoellig, ``Safe and robust learning control with gaussian processes,'' in \emph{2015 European Control Conference (ECC)}.\hskip 1em plus 0.5em minus 0.4em\relax IEEE, 2015, pp. 2496--2501.

\bibitem{wang2015robust}
Y.~Wang, C.~Ocampo-Martinez, and V.~Puig, ``Robust model predictive control based on gaussian processes: Application to drinking water networks,'' in \emph{2015 European Control Conference (ECC)}, 2015, pp. 3292--3297.

\bibitem{rose2023learning}
A.~Rose, M.~Pfefferkorn, H.~H. Nguyen, and R.~Findeisen, ``Learning a gaussian process approximation of a model predictive controller with guarantees,'' in \emph{2023 62nd IEEE Conference on Decision and Control (CDC)}.\hskip 1em plus 0.5em minus 0.4em\relax IEEE, 2023, pp. 4094--4099.

\bibitem{bradford2020stochastic}
E.~Bradford, L.~Imsland, D.~Zhang, and E.~A. del Rio~Chanona, ``Stochastic data-driven model predictive control using gaussian processes,'' \emph{Computers \& Chemical Engineering}, vol. 139, p. 106844, 2020.

\bibitem{c2}
L.~Hewing, J.~Kabzan, and M.~N. Zeilinger, ``Cautious model predictive control using gaussian process regression,'' \emph{IEEE Transactions on Control Systems Technology}, vol.~28, no.~6, pp. 2736--2743, 2019.

\bibitem{capone2024online}
A.~Capone, T.~Brüdigam, and S.~Hirche, ``Online constraint tightening in stochastic model predictive control: A regression approach,'' \emph{IEEE Transactions on Automatic Control}, pp. 1--15, 2024, doi:{\color{black} \href{http://dx.doi.org/10.1109/TMC.2019.2944829}{10.1109/TAC.2024.3433988}}.

\bibitem{li2023distributionally}
B.~Li, T.~Guan, L.~Dai, and G.-R. Duan, ``Distributionally robust model predictive control with output feedback,'' \emph{IEEE Transactions on Automatic Control}, vol.~69, no.~5, pp. 3270--3277, 2024.

\bibitem{kim2023distributional}
K.~Kim and I.~Yang, ``Distributional robustness in minimax linear quadratic control with wasserstein distance,'' \emph{SIAM Journal on Control and Optimization}, vol.~61, no.~2, pp. 458--483, 2023.

\bibitem{mark2022recursively}
C.~Mark and S.~Liu, ``Recursively feasible data-driven distributionally robust model predictive control with additive disturbances,'' \emph{IEEE Control Systems Letters}, vol.~7, pp. 526--531, 2022.

\bibitem{pan2023distributionally}
G.~Pan and T.~Faulwasser, ``Distributionally robust uncertainty quantification via data-driven stochastic optimal control,'' \emph{IEEE Control Systems Letters}, vol.~7, pp. 3036--3041, 2023.

\bibitem{micheli2022data}
F.~Micheli, T.~Summers, and J.~Lygeros, ``Data-driven distributionally robust mpc for systems with uncertain dynamics,'' in \emph{2022 IEEE 61st Conference on Decision and Control (CDC)}.\hskip 1em plus 0.5em minus 0.4em\relax IEEE, 2022, pp. 4788--4793.

\bibitem{hakobyan2020learning}
A.~Hakobyan and I.~Yang, ``Learning-based distributionally robust motion control with gaussian processes,'' in \emph{2020 IEEE/RSJ International Conference on Intelligent Robots and Systems (IROS)}.\hskip 1em plus 0.5em minus 0.4em\relax IEEE, 2020, pp. 7667--7674.

\bibitem{c3}
F.~Li, H.~Li, and Y.~He, ``Adaptive stochastic model predictive control of linear systems using gaussian process regression,'' \emph{IET Control Theory \& Applications}, vol.~15, no.~5, pp. 683--693, 2021.

\bibitem{c5}
C.~Ning and F.~You, ``Online learning based risk-averse stochastic mpc of constrained linear uncertain systems,'' \emph{Automatica}, vol. 125, p. 109402, 2021.

\bibitem{c4}
C.~Rasmussen and Z.~Ghahramani, ``Infinite mixtures of gaussian process experts,'' \emph{Advances in neural information processing systems}, vol.~14, 2001.

\bibitem{van2015distributionally}
B.~P. Van~Parys, D.~Kuhn, P.~J. Goulart, and M.~Morari, ``Distributionally robust control of constrained stochastic systems,'' \emph{IEEE Transactions on Automatic Control}, vol.~61, no.~2, pp. 430--442, 2015.

\bibitem{mayne2005robust}
D.~Q. Mayne, M.~M. Seron, and S.~V. Rakovi{\'c}, ``Robust model predictive control of constrained linear systems with bounded disturbances,'' \emph{Automatica}, vol.~41, no.~2, pp. 219--224, 2005.

\bibitem{hanasusanto2015distributionally}
G.~A. Hanasusanto, D.~Kuhn, S.~W. Wallace, and S.~Zymler, ``Distributionally robust multi-item newsvendor problems with multimodal demand distributions,'' \emph{Mathematical Programming}, vol. 152, no.~1, pp. 1--32, 2015.

\bibitem{boyd2004convex}
S.~Boyd and L.~Vandenberghe, \emph{Convex optimization}.\hskip 1em plus 0.5em minus 0.4em\relax Cambridge university press, 2004.

\bibitem{10287922}
L.~Li, C.~Ning, H.~Qiu, W.~Du, and Z.~Dong, ``Online data-stream-driven distributionally robust optimal energy management for hydrogen-based multimicrogrids,'' \emph{IEEE Transactions on Industrial Informatics}, vol.~20, no.~3, pp. 4370--4384, 2024.

\bibitem{franke1979critical}
R.~Franke, \emph{A critical comparison of some methods for interpolation of scattered data}.\hskip 1em plus 0.5em minus 0.4em\relax Naval Postgraduate School Monterey, CA, 1979.

\bibitem{nakka2018six}
Y.~K. Nakka, R.~C. Foust, E.~S. Lupu, D.~B. Elliott, I.~S. Crowell, S.-J. Chung, and F.~Y. Hadaegh, ``A six degree-of-freedom spacecraft dynamics simulator for formation control research,'' in \emph{2018 AAS/AIAA Astrodynamics Specialist Conference}, 2018, pp. 1--20.

\end{thebibliography}

\end{document}